\newcommand {\calX} {\mathcal{X}}
\newcommand {\calY} {\mathcal{Y}}
\newcommand {\calZ} {\mathcal{Z}}
\newcommand {\calB} {\mathcal{B}}
\newcommand {\calC} {\mathcal{C}}
\newcommand {\calQ} {\mathcal{Q}}
\newcommand {\calV} {\mathcal{V}}
\newcommand {\calU} {\mathcal{U}}
\newcommand {\calS} {\mathcal{S}}
\DeclareMathOperator*{\argmax}{arg\,max}
\newcommand{\p}[1]{\left(#1\right)}
\newcommand{\pp}[1]{\left[#1\right]}
\newcommand{\ppp}[1]{\left\{#1\right\}}
\newcommand{\DIDeg}{\Delta I}
\newcommand{\DIDegStar}{\Delta I^{\ast}}
\newcommand {\bfalpha}{\bv{\alpha}}
\newcommand {\bfbeta}{\bv{\beta}}
\newcommand {\bfgamma}{\bv{\gamma}}
\newcommand {\bfzeta}{\bv{\zeta}}
\newcommand {\bfrho}{\bv{\rho}}
\newcommand{\reals}{\mathbb{R}}
\newcommand{\realsX}{\reals^{|\calX|}}
\newcommand{\realsXminusone}{\reals^{|\calX|-1}}
\newcommand{\realsK}{\mathbb{K}}
\newcommand{\realsKX}{\realsK^{|\calX|}}
\newcommand{\realsKplus}{\realsK_+}
\newcommand{\realsKplusX}{\realsKplus^{|\calX|}}
\newcommand{\calYSmall}{\calY_{\mathrm{small}}}
\newcommand{\calBK}{\calB_\realsK}
\newcommand{\rcritical}{r_\mathrm{critical}}
\newcommand{\xmax}{x_{\max}}
\newcommand{\omegaup}{\overline\omega}
\newcommand{\omegadown}{\underline\omega}
\newcommand{\density}{\varphi}
\newcommand{\weightQTag}{M\pp{\calQ'(\bfalpha,r)}}
\newcommand{\weightVTag}{M\pp{\calV'}}
\begin{document}

\title{Greedy-Merge Degrading has Optimal Power-Law}
\author{Assaf~Kartowsky and~Ido~Tal\\  
       Department of Electrical Engineering \\
Technion - Haifa 32000, Israel\\
E-mail: \{kartov@campus, idotal@ee\}.technion.ac.il
}
\maketitle

\begin{abstract}
Consider a channel with a given input distribution. Our aim is to degrade it to a channel with at most $L$ output letters. One such degradation method is the so called ``greedy-merge'' algorithm. We derive an upper bound on the reduction in mutual information between input and output. For fixed input alphabet size and variable $L$, the upper bound is within a constant factor of an algorithm-independent lower bound. Thus, we establish that greedy-merge is optimal in the power-law sense.
\end{abstract}

\section{Introduction}
\label{sec:Introduction}
In myriad digital processing contexts, quantization is used to map a large alphabet to a smaller one. For example, quantizers are an essential building block in receiver design, used to keep the complexity and resource consumption manageable. The quantizer used has a direct influence on the attainable code rate. 

Another recent application is related to polar codes~\cite{Arikan:PolarCodes}. Polar code construction is equivalent to evaluating the misdecoding probability of each channel in a set of synthetic channels. This evaluation cannot be carried out naively, since the output alphabet size of a synthetic channel is intractably large. One approach to circumvent this difficulty is to degrade the evaluated synthetic channel to a  channel with manageable output alphabet size \cite{Tal:Construct}\cite{Pedarsani:Construction}\cite{Tal:ConstructingPolarCodes}\cite{Gulcu:DMC}\cite{PeregTal:17p}\cite{Tal:Wiretap}.

Given a design parameter $L$, we degrade an initial channel to a new one with output alphabet size at most $L$. We assume that the input distribution is specified, and note that this degradation reduces the mutual information between the channel input and output. In both examples above, this reduction is roughly the loss in code rate due to quantization. We denote the smallest reduction possible by $\DIDegStar$. 

Let $|\calX|$ denote the channel input alphabet size, and treat it as a fixed quantity. We show that for any input distribution and any initial channel, $\DIDegStar=O(L^{-2/(|\calX|-1)})$. Moreover, this bound is attained efficiently, by the greedy-merge algorithm \cite{Tal:Construct}\cite{Gulcu:DMC}. This bound is tighter than the bounds derived in \cite{Pedarsani:Construction}, \cite{Tal:ConstructingPolarCodes}, \cite{Gulcu:DMC} and \cite{PeregTal:17p}. In fact, up to constant multipliers (dependent on $|\calX|$), this bound is the tightest possible. Namely, \cite{Tal:ModerateSizes} proves the existence of an input distribution and a sequence of channels for which $\DIDegStar=\Omega(L^{-2/(|\calX|-1)})$. Both bounds have $-2/(|\calX|-1)$ as the power of $L$, the same power-law. Note that for noisy channels and a relatively small $L$ our bound can be tightened~\cite{Nazer:Distilling}. See also \cite{Zhang:KLMeans}, which is especially relevant in the context of small $L$.

\section{Preliminaries}
We are given an input distribution and a discrete memoryless channel (DMC) $W: \calX \rightarrow \calY$. Both $|\calX|$ and $|\calY|$ are assumed finite. Let $X$ and $Y$ denote the random variables that correspond to the channel input and output, respectively. Denote the corresponding distributions $P_X$ and $P_Y$. Let $W(y|x)\triangleq \Prob{Y=y|X=x}$. For brevity, let $\pi (x)\triangleq \Prob{X=x}=P_X(x)$. Assuming further that $\calX$ and $\calY$ are disjoint, we abuse notation and denote $\Prob{X=x|Y=y}$ and $\Prob{Y=y}$ as $W(x|y)$ and $\pi(y)$, respectively. Without loss of generality, $\pi(x) > 0$ and $\pi(y) > 0$. We \emph{do not} assume that $W$ is symmetric. 

The mutual information between channel input and output is 
\[
I(W,P_X) \triangleq I(X;Y) = \sum_{x\in \calX} \eta(\pi (x))-\sum_{\substack{x \in \calX, \\ y \in \calY} } \pi(y) \eta(W(x|y)) \; ,
\]
where $\eta(p)\triangleq -p \log p$ for $p>0$, zero for $p=0$, and the logarithm is taken in the natural basis. We note that the input distribution does not necessarily have to be the one that achieves the channel capacity.

We now define the relation of degradedness between channels. A channel $Q: \calX \rightarrow \calZ$ is said to be (stochastically) \emph{degraded} with respect to a channel $W: \calX \rightarrow \calY$, and we write $Q \preccurlyeq W$, if there exists a channel $\Phi: \calY \rightarrow \calZ$ such that
\begin{equation}
\label{eq:Degrading}
Q(z|x)=\sum_{y \in \calY} W(y|x) \Phi(z|y) \; ,
\end{equation}
for all $x \in \calX$ and $z \in \calZ$. Note that as a result of the data processing theorem, $Q \preccurlyeq W$ implies $\DIDeg \triangleq I(W,P_X)-I(Q,P_X) \geq 0$.

Although mentioned before, let us properly define the \emph{optimal degrading loss} for a given pair $(W,P_X)$ as
\begin{equation}
\label{eq:DIDeg}
\DIDegStar \triangleq \min_{\substack{{Q,\Phi:Q \preccurlyeq W,} \\ {|Q| \leq L}} } I(W,P_X)-I(Q,P_X) \; ,
\end{equation}
where $|Q|$ denotes the output alphabet size of the channel $Q$. The optimizer $Q$ is the degraded channel that is ``closest" to $W$ in the sense of mutual information, yet has at most $L$ output letters.

\section{Main result}
\label{sec:UBDC}
Our main result is an upper bound on $\DIDegStar$, in terms of $|\calX|$ and $L$. This upper bound will follow from analyzing a sub-optimal\footnote{For the binary-input case, optimal degrading can be realized through dynamic programming \cite{Kurkoski:Quantization}\cite{Iwata:SMAWK}. For the non-binary case, we do not know of an efficient realization of optimal degrading.} degrading algorithm, called ``greedy-merge''.
In each iteration of greedy-merge,  we merge the two output letters $y_a,y_b \in \calY$ that result in the smallest decrease of mutual information between input and output, denoted $\DIDeg$. Namely, the intermediate channel $\Phi$ maps $y_a$ and $y_b$ to a new symbol, while all other symbols are unchanged by $\Phi$. This is repeated $|\calY|-L$ times, to yield an output alphabet size of $L$. By upper bounding the $\DIDeg$ of each iteration we obtain an upper bound on $\DIDegStar$. A key result is the following theorem, stating  that there exists a pair of output letters whose merger yields a ``small" $\DIDeg$. 

\begin{theorem}
\label{thm:DIDeg}
	Let a DMC $W:\calX \rightarrow \calY$ satisfy $|\calY|>2|\calX|$, and let the input distribution $P_X$ be fixed. There exists a pair $y_a,y_b \in \calY$ whose merger results in a channel $Q$ satisfying $\DIDeg = O\p{|\calY|^{-\frac{|\calX|+1}{|\calX|-1}}}$. In particular,
	\begin{equation}
	\label{eq:DIDegOrder}
	\DIDeg \leq   \mu(|\calX|)\cdot|\calY|^{-\frac{|\calX|+1}{|\calX|-1}} \; ,
	\end{equation}
	where,
	\[
	\mu(|\calX|) \triangleq \frac{\pi|\calX|}{\p{\sqrt{1+\frac{1}{2(|\calX|-1)}}-1}^2}    \p{\frac{2|\calX|}{\Gamma \p{1+\frac{|\calX|-1}{2}}}}^{\frac{2}{|\calX|-1}} \; ,
	\]
	and $\Gamma(\cdot)$ is the Gamma function.	 
\end{theorem}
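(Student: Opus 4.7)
The strategy is a two-level pigeonhole on $\calY$: first on the probabilities $\pi(y)$, then on the posteriors $W(\cdot|y)$ viewed as points in the $(|\calX|-1)$-dimensional probability simplex. Because $\sum_y \pi(y)=1$ there must be many letters of small mass, and because these \emph{light} letters all lie in a bounded low-dimensional simplex, two of them must have nearly-identical posteriors by a covering-number argument. Merging such a pair should cost very little mutual information; optimizing the two pigeonhole scales against $|\calY|$ then produces the claimed rate.

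\textbf{Merging bound.} First I would establish a quadratic merging bound of the form $\Delta I \le C \cdot \frac{\pi_a \pi_b}{\pi_a+\pi_b} \cdot \|\bv{p}_a - \bv{p}_b\|^2$, where $\bv{p}_a = W(\cdot|y_a)$, $\bv{p}_b = W(\cdot|y_b)$ and $C$ depends only on $|\calX|$. This would follow from writing $\Delta I$ as the Jensen gap $(\pi_a + \pi_b) H(\bv{\bar p}) - \pi_a H(\bv{p}_a) - \pi_b H(\bv{p}_b)$ of the concave entropy $H$ at the mixture $\bv{\bar p} = (\pi_a\bv{p}_a+\pi_b\bv{p}_b)/(\pi_a+\pi_b)$, and then applying a second-order Taylor argument to $\eta(p) = -p\log p$ coordinate-wise.

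\textbf{Covering and optimization.} Next, fix a threshold $t>0$: at most $1/t$ letters have $\pi(y)>t$, so at least $|\calY|-1/t$ light letters (with $\pi(y)\le t$) remain. Their posteriors live in the probability simplex, which I cover by Euclidean balls of radius $r$. Since the $(|\calX|-1)$-dimensional ball has volume $\pi^{(|\calX|-1)/2} r^{|\calX|-1}/\Gamma(1+(|\calX|-1)/2)$ (this is where the Gamma function in $\mu(|\calX|)$ enters), the covering number scales as $r^{-(|\calX|-1)}$ times a $|\calX|$-dependent factor. When $|\calY|-1/t$ exceeds this covering number, two light letters must share a ball, hence satisfy $\|\bv{p}_a-\bv{p}_b\|\le 2r$; merging them gives $\Delta I \le 2Ctr^2$. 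Optimizing $t$ and $r$ jointly yields $t\sim 1/|\calY|$ and $r^{|\calX|-1}\sim 1/|\calY|$, hence $\Delta I = O(|\calY|^{-(|\calX|+1)/(|\calX|-1)})$, matching (\ref{eq:DIDegOrder}). The hypothesis $|\calY|>2|\calX|$ is exactly what keeps the pigeonhole constraint feasible, while the factor $(\sqrt{1+1/(2(|\calX|-1))}-1)^{-2}$ in $\mu(|\calX|)$ should emerge from the stationary condition of the joint optimization.

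\textbf{Main obstacle.} The merging bound is the delicate step: a naive Taylor expansion of $\eta(p)=-p\log p$, whose second derivative $-1/p$ blows up near zero, gives a constant $C$ proportional to $1/\min_x \bv{\bar p}(x)$, which is unbounded on the simplex boundary. A successful proof must avoid this blow-up --- for example by bounding the Jensen gap in a form like $\sum_x (p_a(x)-p_b(x))^2/(p_a(x)+p_b(x))$, which is always finite and connects to the Hellinger distance, or by treating coordinates with small $\bv{\bar p}(x)$ separately --- so as to obtain a constant $C$ depending only on $|\calX|$. Controlling this is what produces the clean, dimension-only constant $\mu(|\calX|)$ in (\ref{eq:DIDegOrder}).
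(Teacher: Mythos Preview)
Your high-level plan --- pigeonhole on $\pi(y)$, then a packing argument on posteriors in the simplex --- matches the paper's structure. But the implementation you sketch has a real gap, and it is precisely the one you flag under ``Main obstacle'': you never actually resolve it, and your Euclidean covering cannot be made to work without doing so.

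Concretely, the bound $\Delta I \le C\,(\pi_a+\pi_b)\,\|\bv{p}_a-\bv{p}_b\|^2$ with $C$ depending only on $|\calX|$ is \emph{false}; no amount of optimization over $t,r$ will repair this. The paper does not try to salvage a uniform Euclidean bound. Instead it uses a composite ``distance'' $d(\alpha,\beta)=\min\bigl(|\alpha-\beta|,\,(\alpha-\beta)^2/\min(\alpha,\beta)\bigr)$ coordinatewise and then $d(\bfalpha,\bfbeta)=\max_x d(\alpha_x,\beta_x)$, so that the linear branch takes over near the boundary and the bound $\Delta I \le (\pi_a+\pi_b)\,|\calX|\,d(\bfalpha,\bfbeta)$ holds everywhere. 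The price is that the ``balls'' $\{d(\bfalpha,\cdot)\le r\}$ are boxes whose side lengths scale like $\max(\sqrt{\alpha_x r},r)$ and hence shrink near the boundary; a uniform-volume pigeonhole (your Euclidean covering) fails for them. The paper's key device is a \emph{weighted} packing: integrate against the density $\varphi(\bfzeta')=\prod_{x\neq \xmax} (4\zeta_x)^{-1/2}$, under which every quadrant $\calQ'(\bfalpha,r)$ has weight bounded below by a function of $r$ and $|\calX|$ alone (Lemma~\ref{lm:weightQTagLowerBound}), while the total weight of the admissible region is finite via the change of variables $\rho_x=\sqrt{\zeta_x}$, which turns it into a Euclidean ball quadrant (this is where $\Gamma$ enters). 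The letters are first bucketed by $\xmax(\bfalpha)$ so that the projection $\bfalpha\mapsto\bfalpha'$ is well defined and the weighted region stays bounded.

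Two specific corrections to your narrative: (i) the factor $\bigl(\sqrt{1+1/(2(|\calX|-1))}-1\bigr)^{-2}$ is not a Lagrange-multiplier artifact of optimizing $t$ against $r$; it is exactly the worst case of the per-quadrant weight lower bound, attained at $\alpha=2r$ where the two branches of $d$ meet. (ii) Your Hellinger-type suggestion $\sum_x (p_a(x)-p_b(x))^2/(p_a(x)+p_b(x))$ is in the right spirit --- it is essentially the $d_2$ branch --- but by itself it still produces location-dependent balls and does not remove the need for the weighted packing. In short, the missing idea is not a better Taylor estimate but the combination of the $\min(d_1,d_2)$ distance with the $\zeta^{-1/2}$-weighted sphere-packing.
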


Recall that \Cref{thm:DIDeg} is referring to the merger of a single pair of output letters. The following corollary is our main result, and is basically an iterative utilization of \Cref{thm:DIDeg}.
\begin{corollary}
	\label{cor:DIDegStar}
	Let a DMC $W:\calX \rightarrow \calY$ satisfy $|\calY|>2|\calX|$ and let $L \geq 2|\calX|$. Then, for any fixed input distribution $P_X$,
	\[
	\DIDegStar =\min_{\substack{{Q,\Phi:Q \preccurlyeq W,} \\ {|Q| \leq L}} } I(W,P_X)-I(Q,P_X)= O\p{L^{-\frac{2}{|\calX|-1}}} \; .
	\] 
	In particular, $\DIDegStar \leq \nu(|\calX|)\cdot L^{-\frac{2}{|\calX|-1}}$, where $\nu(|\calX|) \triangleq \frac{|\calX|-1}{2} \mu(|\calX|)$, and $\mu(\cdot)$ was defined in \Cref{thm:DIDeg}. This bound is attained by greedy-merge, and is tight in the power-law sense.
\end{corollary}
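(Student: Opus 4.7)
The plan is to obtain the corollary by running greedy-merge iteratively, viewing each merge step as one application of \Cref{thm:DIDeg}, and then summing the per-step bounds via an integral comparison.

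First I would set up the iteration. Starting from $W:\calX\to\calY$, greedy-merge produces a sequence of channels with output alphabet sizes $|\calY|, |\calY|-1, \ldots, L$. Let $W_k$ denote the intermediate channel whose output alphabet has size $k$, so $W_{|\calY|}=W$ and $W_L=Q$ is the final degraded channel with $|Q|=L$. By the definition of greedy-merge, the $k$-to-$(k-1)$ step incurs a mutual-information loss at most as large as the loss incurred by the pair singled out in \Cref{thm:DIDeg} applied to $W_k$. Since we require $L\geq 2|\calX|$, at each intermediate stage we have $k \geq L+1 > 2|\calX|$, so the hypothesis $|\calY|>2|\calX|$ of the theorem holds throughout the iteration. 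Hence the incremental loss is bounded by $\mu(|\calX|)\cdot k^{-\frac{|\calX|+1}{|\calX|-1}}$.

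Next I would telescope. Since $I(W,P_X)-I(Q,P_X)$ decomposes as the sum of per-step losses, and $Q$ is a valid competitor in \eqref{eq:DIDeg},
\[
\DIDegStar \leq \sum_{k=L+1}^{|\calY|}\p{I(W_k,P_X)-I(W_{k-1},P_X)} \leq \mu(|\calX|) \sum_{k=L+1}^{|\calY|} k^{-\frac{|\calX|+1}{|\calX|-1}} \; .
\]
The exponent satisfies $\frac{|\calX|+1}{|\calX|-1}=1+\frac{2}{|\calX|-1}>1$, so the tail sum is finite and admits an integral upper bound:
\[
\sum_{k=L+1}^{|\calY|} k^{-1-\frac{2}{|\calX|-1}} \leq \int_{L}^{\infty} x^{-1-\frac{2}{|\calX|-1}} dx = \frac{|\calX|-1}{2}\cdot L^{-\frac{2}{|\calX|-1}} \; .
\]
Combining these yields $\DIDegStar \leq \frac{|\calX|-1}{2}\mu(|\calX|)\cdot L^{-\frac{2}{|\calX|-1}}=\nu(|\calX|)\cdot L^{-\frac{2}{|\calX|-1}}$, as claimed.

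Finally, the power-law tightness assertion is not something I would reprove here: it is directly inherited from the matching lower bound $\DIDegStar=\Omega\p{L^{-2/(|\calX|-1)}}$ cited from \cite{Tal:ModerateSizes} in the introduction. The only real subtlety in the proof plan is justifying that \Cref{thm:DIDeg} may legitimately be invoked at every stage of the iteration — i.e., checking the hypothesis $k>2|\calX|$ at each $k\in\{L+1,\ldots,|\calY|\}$, which is guaranteed precisely by the assumption $L\geq 2|\calX|$. Everything else is a bookkeeping step (telescoping the mutual-information differences) followed by a standard integral estimate for a convergent $p$-series; I do not expect any conceptual obstacle beyond ensuring that the constant $\mu(|\calX|)$ from \Cref{thm:DIDeg} is carried through cleanly to give exactly $\nu(|\calX|)=\tfrac{|\calX|-1}{2}\mu(|\calX|)$.
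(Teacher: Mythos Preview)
Your proof is correct and follows essentially the same route as the paper: iterate \Cref{thm:DIDeg} over the greedy-merge steps, sum the per-step bounds $\mu(|\calX|)\,k^{-(|\calX|+1)/(|\calX|-1)}$ for $k=L+1,\ldots,|\calY|$, and replace the sum by the integral to obtain $\nu(|\calX|)\,L^{-2/(|\calX|-1)}$, with tightness cited from \cite{Tal:ModerateSizes}. The only cosmetic differences are that you integrate to $\infty$ rather than to $|\calY|$ and make the verification of the hypothesis $k>2|\calX|$ at each stage explicit.
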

\begin{proof}
	If $L \geq |\calY|$, then obviously $\DIDegStar=0$ which is not the interesting case. If $2|\calX| \leq L <|\calY|$, then applying \Cref{thm:DIDeg} repeatedly $|\calY|-L$ times yields
	\begin{align*}
		\DIDegStar &\leq \sum_{\ell=L+1}^{|\calY|} \mu(|\calX|) \cdot \ell^{-\frac{|\calX|+1}{|\calX|-1}} \\
		&\leq \mu(|\calX|)\int_L^{|\calY|} \ell^{-\frac{|\calX|+1}{|\calX|-1}} \dd \ell \\
		&\leq \nu(|\calX|) \cdot  L^{-\frac{2}{|\calX|-1}} \; ,
	\end{align*}
	by the monotonicity of $\ell^{-(|\calX|+1)/(|\calX|-1)}$. The bound is tight in the power-law sense, by \cite[Theorem 2]{Tal:ModerateSizes}.
\end{proof}

Note that for large values of $|\calX|$, the Stirling approximation along with some other first order approximations can be applied to simplify $\nu(|\calX|)$ to $\nu(|\calX|) \approx  16\pi e |\calX|^3$.

\section{Proof of Theorem~\ref{thm:DIDeg}}
The proof of Theorem~\ref{thm:DIDeg} will follow from a sphere-packing argument. In the following subsections we define a ``distance'' function, overcome it not being a metric, and assign different ``weights'' to different spheres. See \cite{KartowskyTal:TightDegrading} for more commentary.
\subsection{An alternative ``distance" function} 
Consider the merger of a pair of output letters $y_a,y_b \in \calY$.  
The new output alphabet of $Q$ is $\calZ=\calY \setminus \ppp{y_a,y_b} \cup \ppp{y_{ab}}$. The channel $Q:\calX \rightarrow \calZ$ then satisfies $Q(y_{ab}|x)=W(y_a|x)+W(y_b|x)$, whereas for all $y \in \calZ \cap \calY$ we have $Q(y|x)=W(y|x)$. Using the shorthand
\[
\pi_{ab} = \pi(y_{ab}) \; , \quad \pi_a = \pi(y_a) \; , \quad \pi_b = \pi(y_b) \; ,
\]
one gets that $\pi_{ab}=\pi_a+\pi_b$. 
Denote by $\bfalpha=(\alpha_x)_{x \in \calX}$, $\bfbeta=(\beta_x)_{x \in \calX}$ and $\bfgamma=(\gamma_x)_{x \in \calX}$ the vectors corresponding to posterior probabilities associated with $y_a,y_b$ and $y_{ab}$, respectively. Namely, $\alpha_x = W(x|y_a)$, $\beta_x = W(x|y_b)$, and 
\begin{equation}
\label{eq:gamma}
\gamma_x = Q(x|y_{ab}) = \frac{\pi_a \alpha_x + \pi_b \beta_x }{\pi_{ab}}=\frac{\pi_a \alpha_x + \pi_b \beta_x }{\pi_a+\pi_b} \; . 
\end{equation}
Thus, after canceling terms, one gets that
\begin{equation}
\label{eq:DIxsum}
\DIDeg=I(W,P_X)-I(Q,P_X)=\sum_{x \in \calX} \Delta I_x \; ,
\end{equation}
where $\Delta I_x \triangleq \pi_{ab}\eta(\gamma_x)-\pi_a\eta(\alpha_x)-\pi_b\eta(\beta_x)$.

In order to bound $\DIDeg$, we give two bounds on $\Delta I_x$. The first bound was derived in~\cite{Gulcu:DMC},
\begin{equation}
\label{eq:d1}
	\Delta I_x \leq (\pi_a + \pi_b)\cdot d_1(\alpha_x,\beta_x) \; ,
\end{equation}
where for $\alpha \geq 0$ and $\zeta \in \reals$, we define $d_1 (\alpha,\zeta) \triangleq |\zeta-\alpha|$ .

The subscript ``$1$'' in $d_1$ is suggestive of the $L_1$ distance. 
We will use $\alpha$ to denote a probability associated with an input letter, while $\zeta$ will denote a ``free'' real variable, possibly negative. 
Note that the bound in \eqref{eq:d1} was derived assuming a uniform input distribution, however remains valid for the general case.

We now derive the second bound on $\Delta I_x$. For the case where $\alpha_x,\beta_x >0$,
\begin{align*}
	\Delta I_x &= \pi_a (\eta(\gamma_x)-\eta(\alpha_x)) + \pi_b(\eta(\gamma_x)-\eta(\beta_x)) \\
	&\stackrel{(a)}{\leq} \pi_a \eta'(\alpha_x)(\gamma_x-\alpha_x) + \pi_b \eta'(\beta_x)(\gamma_x-\beta_x) \\
	&\stackrel{(b)}{=} \frac{\pi_a\pi_b}{\pi_a+\pi_b}(\alpha_x-\beta_x)(\eta'(\beta_x)-\eta'(\alpha_x))\\
	&\stackrel{(c)}{\leq} \frac{1}{4}(\pi_a+\pi_b)(\alpha_x-\beta_x)^2 (-\eta''(\lambda)) \; ,
\end{align*}
where in $(a)$ we used the concavity of $\eta(\cdot)$, in $(b)$ the definition of $\gamma_x$ (see \eqref{eq:gamma}), and in $(c)$ the AM-GM inequality and the mean value theorem where $\lambda=\theta\alpha_x +(1-\theta)\beta_x$ for some $\theta \in [0,1]$. Using the monotonicity of $-\eta''(p)=1/p$ we get $-\eta''(\lambda) \leq 1 / \min(\alpha_x,\beta_x)$. Thus,
\begin{equation}
\label{eq:d2}
	\Delta I_x \leq (\pi_a+\pi_b) \cdot d_2(\alpha_x,\beta_x) \; ,
\end{equation}
where 
\[
d_2(\alpha,\zeta) \triangleq \begin{cases}
\frac{(\zeta-\alpha)^2}{\min(\alpha,\zeta)} & \alpha,\zeta > 0 \; , \\
\infty & \mbox{otherwise} \; .
\end{cases}
\]
The subscript ``$2$'' in $d_2$ is suggestive of the squaring in the numerator. Combining \eqref{eq:d1} and \eqref{eq:d2} yields
\begin{equation}
\label{eq:d}
	\Delta I_x \leq (\pi_a+\pi_b) \cdot d(\alpha_x,\beta_x) \; ,
\end{equation}
where
\begin{equation}
\label{eq:dScalarDef}
d(\alpha,\zeta) \triangleq \min(d_1(\alpha,\zeta),d_2(\alpha,\zeta)) \; .
\end{equation}
Returning to \eqref{eq:DIxsum} using \eqref{eq:d} we get
\begin{equation}
\label{eq:DIBound}
	\DIDeg \leq (\pi_a + \pi_b)|\calX| \cdot d(\bfalpha,\bfbeta)  \; ,
\end{equation}
where
\begin{equation}
\label{eq:dDef}
	d(\bfalpha,\bfzeta) \triangleq \max_{x \in \calX} d(\alpha_x,\zeta_x) \; .
\end{equation}
We note that we use $\max$ in (\ref{eq:dDef}) instead of a summation to simplify the upcoming derivations. Moreover, according to (\ref{eq:DIBound}), it suffices to show the existence of a pair that is ``close" in the sense of $d$, assuming that $\pi_a,\pi_b$ are also small enough.

Since we are interested in lowering the right hand side of \eqref{eq:DIBound}, we limit our search to a subset of $\calY$, as was done in~\cite{Gulcu:DMC}. Namely, $\calYSmall \triangleq \ppp{y \in \calY: \pi(y)\leq 2/|\calY|}$,
which implies
\begin{equation}
\label{eq:YSmallSize}
|\calYSmall| \geq \frac{|\calY|}{2} \; . 
\end{equation}
Hence, $\pi_a+\pi_b \leq 4/|\calY|$ and
\begin{equation}
\label{eq:DIBoundYSmall}
\DIDeg \leq \frac{4|\calX|}{|\calY|}\cdot d(\bfalpha,\bfbeta) \; .
\end{equation}

We still need to prove the existence of a pair $y_a,y_b \in \calYSmall$ that is ``close" in the sense of $d$. To that end, as in~\cite{Gulcu:DMC}, we would like to use a sphere-packing approach. A typical use of such an argument assumes a proper metric, yet $d$ is not a metric. Specifically, the triangle-inequality does not hold.
The absence of a triangle-inequality is a complication that we will overcome, but some care and effort are called for. Broadly speaking, as usually done in sphere-packing, we aim to show the existence of a critical ``sphere" radius, $\rcritical=\rcritical(|\calX|,|\calY|) > 0$. Such a critical radius will ensure the existence of $y_a,y_b\in\calYSmall$ with corresponding $\bfalpha$ and $\bfbeta$ for which $d(\bfalpha,\bfbeta) \leq \rcritical$.

\subsection{Non-intersecting ``spheres"}
We start by giving explicit equations for the ``spheres" corresponding to $d_1$ and $d_2$.
\begin{lemma}
	\label{lm:B1B2}
	For $\alpha \geq 0$ and $r>0$, define the sets $\calB_1, \calB_2$ as
\[
\calB_i(\alpha,r) \triangleq \{\zeta \in \reals : d_i(\alpha,\zeta) \leq r\} \; , \quad i \in \{1,2\} \; .
\]
Then,
\[
\calB_1(\alpha,r) = \{\zeta \in \reals : -r \leq \zeta-\alpha \leq r \}
\]
and
\begin{multline*}
\calB_2(\alpha,r) \\
= \{\zeta \in \reals :  - \sqrt{r^2/4 + \alpha \cdot r} + r/2 \leq \zeta-\alpha  \leq \sqrt{\alpha \cdot r} \} \; .
\end{multline*}
\end{lemma}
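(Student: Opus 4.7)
The lemma reduces to unpacking the definitions of $d_1$ and $d_2$ and solving an elementary quadratic inequality, so the plan is entirely computational.

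For $\calB_1$, the statement is immediate: by definition $d_1(\alpha,\zeta) = |\zeta - \alpha|$, so $d_1(\alpha,\zeta) \le r$ is equivalent to $-r \le \zeta - \alpha \le r$.

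For $\calB_2$, I would first note that any $\zeta \in \calB_2(\alpha,r)$ must satisfy $\alpha,\zeta > 0$, since otherwise $d_2(\alpha,\zeta) = \infty > r$. I would then split on the sign of $\zeta - \alpha$. In the case $\zeta \ge \alpha$, we have $\min(\alpha,\zeta) = \alpha$, so the condition $(\zeta-\alpha)^2 \le r\alpha$ rearranges directly to $\zeta - \alpha \le \sqrt{\alpha r}$, matching the upper bound in the claimed interval. In the case $\zeta < \alpha$, we have $\min(\alpha,\zeta) = \zeta$, and setting $u = \zeta - \alpha < 0$ the defining inequality becomes $u^2 \le r(\alpha + u)$, i.e.\ $u^2 - ru - r\alpha \le 0$. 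The two roots of the associated quadratic are $\tfrac{r}{2} \pm \sqrt{r^2/4 + \alpha r}$, and since we are in the regime $u < 0$ only the lower root is binding, giving $u \ge \tfrac{r}{2} - \sqrt{r^2/4 + \alpha r}$, which matches the lower bound in the claimed interval.

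The one small point that deserves a line of justification is that the formula produced for $\zeta < \alpha$ is consistent with the implicit constraint $\zeta > 0$ needed for $d_2$ to be finite. I would verify this by checking that $\alpha + r/2 \ge \sqrt{r^2/4 + r\alpha}$, which after squaring is equivalent to $\alpha^2 \ge 0$; hence the lower endpoint $\zeta = \alpha + r/2 - \sqrt{r^2/4 + r\alpha}$ is nonnegative, and strictly positive when $\alpha > 0$, so no constraint is lost. Combining the two cases gives exactly the stated interval for $\calB_2(\alpha,r)$.

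There is no genuine obstacle here; the only care needed is the sign bookkeeping when selecting the correct root of the quadratic in the sub-case $\zeta < \alpha$, and the observation that the resulting endpoint automatically lies in the domain of $d_2$. The asymmetry of the interval for $\calB_2$ around $\alpha$, versus the symmetry for $\calB_1$, is the qualitative feature that the subsequent sphere-packing argument will have to accommodate.
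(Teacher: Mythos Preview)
Your proposal is correct and follows essentially the same approach as the paper: both proofs handle $\calB_1$ by unwinding the absolute value, and handle $\calB_2$ by splitting on whether $\zeta\ge\alpha$ or $\zeta<\alpha$ and solving the resulting inequality in each case. Your write-up is slightly more explicit about the quadratic and adds the consistency check that the lower endpoint satisfies $\zeta\ge 0$, which the paper omits; otherwise the arguments coincide.
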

\begin{proof}
	Assume $\zeta \in \calB_1(\alpha,r)$. Then $\zeta$ satisfies $|\zeta-\alpha| \leq r$, which is equivalent to $-r \leq \zeta - \alpha \leq r$, and we get the desired result for $\calB_1(\alpha,r)$. Assume now $\zeta \in \calB_2(\alpha,r)$. If $\zeta \geq \alpha$, then $\min(\alpha,\zeta)=\alpha$, and thus $(\zeta-\alpha)^2/\alpha \leq r$, which implies $0 \leq \zeta - \alpha \leq \sqrt{\alpha\cdot r}$. If $\zeta \leq \alpha$, then $\min(\alpha,\zeta)=\zeta$, and thus, $(\zeta-\alpha)^2/\zeta \leq r$, which implies $- \sqrt{r^2/4 + \alpha \cdot r} + r/2 \leq \zeta-\alpha  \leq 0$. The union of the two yields the desired result for $\calB_2(\alpha,r)$.
\end{proof}

Thus, we define $\calB(\alpha,r) \triangleq \{\zeta \in \reals : d(\alpha,\zeta) \leq r\}$, and note that $\calB(\alpha,r)=\calB_1(\alpha,r) \cup \calB_2(\alpha,r)$,
since $d$ takes the $\min$ of the two distances. Namely,
\begin{equation}
\label{eq:calBscalar}
\calB(\alpha,r) = \left\{\zeta \in \reals :  -\omegadown(\alpha,r) \leq \zeta-\alpha \leq \omegaup(\alpha,r) \right\} \; ,
\end{equation}
where $\omegadown(\alpha,r) \triangleq \max\left(\sqrt{r^2/4 + \alpha \cdot r} - r/2,r\right)$ and $\omegaup(\alpha,r) \triangleq  \max\left(\sqrt{\alpha \cdot r},r\right)$. To extend $\calB$ to vectors, we define $\realsX$ as the set of vectors with real entries that are indexed by $\calX$, $\realsX \triangleq \ppp{\bfzeta=(\zeta_x)_{x\in\calX}:\zeta_x \in \reals}$.
The set $\realsKX$ is defined as the set of vectors from $\realsX$ with entries summing to $1$, $\realsKX \triangleq \ppp{\bfzeta \in \realsX: \sum_{x\in\calX} \zeta_x=1}$ .
The set $\realsKplusX$ is the set of probability vectors. Namely, the set of vectors from $\realsKX$ with non-negative entries, $\realsKplusX \triangleq \ppp{\bfzeta \in \realsKX: \zeta_x \geq 0}$.
We can now define $\calB(\bfalpha,r)$. For $\bfalpha \in \realsKplusX$ let
\begin{equation}
\label{eq:calBVectorDef}
\calB(\bfalpha,r) \triangleq \ppp{\bfzeta \in \realsX:d(\bfalpha,\bfzeta) \leq r } \; .
\end{equation}
Using \eqref{eq:dDef} and \eqref{eq:calBscalar} we have a simple characterization of $\calB(\bfalpha,r)$ as a box: a Cartesian product of segments. That is, 
\begin{align}
\label{eq:calBVector}
\begin{split}
\calB(\bfalpha,r) &= \Big\{ \bfzeta \in \realsX :  \\
& \quad \quad -\omegadown(\alpha_x,r) \leq \zeta_x-\alpha_x \leq \omegaup(\alpha_x,r) \Big\} \; .
\end{split}
\end{align}
We stress that the box $\calB(\bfalpha,r)$ contains $\bfalpha$, but is not necessarily centered at it.

Recall our aim is finding an $\rcritical$. Using our current notation, $\rcritical$ must imply the existence of distinct $y_a, y_b \in \calYSmall$ such that $\bfbeta \in \calB(\bfalpha,\rcritical)$.
Note that the set $\calB(\bfalpha,r)$ is contained in $\realsX$. However, since the boxes are induced by points $\bfalpha$ in the subspace $\realsKplusX$ of $\realsX$, the sphere-packing would yield a tighter result if performed in $\realsKX$ rather than in $\realsX$. Then, for $\bfalpha \in \realsKplusX$ and $r>0$, let us define
\begin{equation}
\label{eq:calBK}
\calBK(\bfalpha,r) = \calB(\bfalpha,r) \cap \realsKX \; .
\end{equation}
When considering $\calBK(\bfalpha,r)$ in place of $\calB(\bfalpha,r)$, we have gained in that the affine dimension (see \cite[Section 2.1.3]{Boyd:Convex})
of $\calBK(\bfalpha,r)$ is $|\calX|-1$ while that of $\calB(\bfalpha,r)$ is $|\calX|$. However, we have lost in simplicity: the set $\calBK(\bfalpha,r)$ is not a box. Indeed, a moment's thought reveals that any subset of $\realsKX$ with more than one element cannot be a box.

We now show how to overcome the above loss. That is, we show a subset of $\calBK(\bfalpha,r)$ which is --- up to a simple transform --- a box. Denote the index of the largest entry of a vector $\bfalpha \in \realsKX$ as $\xmax(\bfalpha)$, namely, $\xmax(\bfalpha) \triangleq \argmax_{x \in \calX} \alpha_x$. In case of ties, define $\xmax(\bfalpha)$ in an arbitrary yet consistent manner. For $\xmax = \xmax(\bfalpha)$ given, or clear from the context, define $\bfzeta'$ as $\bfzeta$, with index $\xmax$ deleted. That is, for a given $\bfzeta \in \realsKX$, $\bfzeta' = (\zeta_x)_{x \in \calX'} \in \realsXminusone$, where $ \calX' \triangleq \calX \setminus \{\xmax\}$.
Note that for $\bfzeta \in \realsKX$, all the entries sum to one. Thus, given $\bfzeta'$ and $\xmax$, we know $\bfzeta$. Next, for $\bfalpha \in \realsKplusX$ and $r>0$, define the set
\begin{multline}
\label{eq:calC}
\calC(\bfalpha,r) = \{ \bfzeta \in \realsKX  :\\
 \forall x \in \calX' \; , \;  -\omega'(\alpha_x,r) \leq \zeta_x - \alpha_x \leq \omega'(\alpha_x,r) \} \; , 
\end{multline}
where $\xmax = \xmax(\bfalpha)$ and
\begin{equation}
\label{eq:omegaPrime}
\omega'(\alpha,r) \triangleq \frac{\omegadown(\alpha,r)}{|\calX|-1} \; .
\end{equation}

\begin{lemma}
\label{lemm:CinB}
	Let $\bfalpha \in \realsKplusX$ and $r > 0$ be given. Let $\xmax = \xmax(\bfalpha)$. Then, $\calC(\bfalpha,r) \subset \calBK(\bfalpha,r)$.
\end{lemma}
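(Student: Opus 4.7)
The plan is to take an arbitrary $\bfzeta \in \calC(\bfalpha,r)$ and verify the box constraints that define $\calB(\bfalpha,r)$ coordinate by coordinate. Since $\calC(\bfalpha,r) \subset \realsKX$ by definition, it suffices to show $\bfzeta \in \calB(\bfalpha,r)$, i.e., that $-\omegadown(\alpha_x,r) \leq \zeta_x - \alpha_x \leq \omegaup(\alpha_x,r)$ for every $x \in \calX$, including $x = \xmax$, which is the only coordinate not constrained directly by $\calC$.

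I would begin with two elementary facts about $\omegadown$ and $\omegaup$, to be verified by direct algebraic manipulation. First, $\omegadown(\alpha,r) \leq \omegaup(\alpha,r)$ for all $\alpha \geq 0$ and $r > 0$; this follows from $\sqrt{r^2/4 + \alpha r} - r/2 \leq \sqrt{\alpha r}$ (obtained by squaring $\sqrt{\alpha r} + r/2$) together with the fact that both $\omegadown$ and $\omegaup$ are maxima with the common value $r$. Second, $\omegadown(\alpha,r)$ is non-decreasing in $\alpha$, since $\sqrt{r^2/4 + \alpha r} - r/2$ is increasing in $\alpha$ for $r > 0$ fixed. Given these, for every $x \in \calX'$ the defining inequality of $\calC(\bfalpha,r)$ yields
\[
|\zeta_x - \alpha_x| \leq \omega'(\alpha_x,r) = \frac{\omegadown(\alpha_x,r)}{|\calX|-1} \leq \omegadown(\alpha_x,r) \leq \omegaup(\alpha_x,r),
\]
so the desired box constraints hold immediately for all coordinates in $\calX'$.

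The main obstacle, and the place where the choice of $\xmax$ really matters, is the single remaining coordinate $x = \xmax$. Here I would exploit the affine constraint $\sum_{x \in \calX} \zeta_x = \sum_{x \in \calX} \alpha_x = 1$ built into $\realsKX$ to write $\zeta_{\xmax} - \alpha_{\xmax} = -\sum_{x \in \calX'}(\zeta_x - \alpha_x)$, and then apply the triangle inequality together with the $\calC$-constraints to obtain
\[
|\zeta_{\xmax} - \alpha_{\xmax}| \leq \sum_{x \in \calX'} \omega'(\alpha_x,r) = \frac{1}{|\calX|-1}\sum_{x \in \calX'} \omegadown(\alpha_x,r).
\]
Since $\alpha_x \leq \alpha_{\xmax}$ for every $x \in \calX'$ by the definition of $\xmax$, the monotonicity of $\omegadown$ in $\alpha$ bounds each summand by $\omegadown(\alpha_{\xmax},r)$, so the whole sum is at most $\omegadown(\alpha_{\xmax},r) \leq \omegaup(\alpha_{\xmax},r)$. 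This delivers the required box inequality at $x = \xmax$ and finishes the proof. In effect, the argument also explains the factor $|\calX|-1$ in the definition of $\omega'$: it is exactly the slack needed to absorb, via the simplex constraint, the deviation that accumulates in the dropped coordinate $\xmax$.
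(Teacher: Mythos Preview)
Your proof is correct and follows essentially the same route as the paper's: verify $\omegadown \leq \omegaup$ and the monotonicity of $\omegadown$ in $\alpha$, handle the coordinates $x \in \calX'$ directly from the defining inequality of $\calC$, and recover the $\xmax$ coordinate via the simplex constraint $\zeta_{\xmax}-\alpha_{\xmax} = -\sum_{x\in\calX'}(\zeta_x-\alpha_x)$ together with the monotonicity bound $\omegadown(\alpha_x,r)\leq\omegadown(\alpha_{\xmax},r)$. The only cosmetic difference is that you phrase the last step via the triangle inequality whereas the paper sums the two-sided inequalities directly; the content is identical.
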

\begin{proof}
	It can be easily shown that $0 \leq \omegadown(\alpha,r) \leq \omegaup(\alpha,r)$. Thus, since \eqref{eq:calC} holds, it suffices to show that
	\begin{equation}
	\label{eq:omegaConditionsOnXmax}
	-\omegadown(\alpha_{\xmax},r) \leq \zeta_{\xmax} - \alpha_{\xmax} \leq \omegadown(\alpha_{\xmax},r) \; .
	\end{equation}
	Indeed, summing the condition in \eqref{eq:calC} over all $x \in \calX'$ gives
	\[
	\sum_{x \in \calX'} -\omega'(\alpha_x,r) \leq \sum_{x \in \calX'}\zeta_x -\sum_{x \in \calX'} \alpha_x \leq \sum_{x \in \calX'} \omega'(\alpha_x,r) \; .
	\]
	Since $\omegadown(\alpha,r)$ is a monotonically non-decreasing function of $\alpha$, we can simplify the above to
	\[
	-\omegadown(\alpha_{\xmax},r) \leq \sum_{x \in \calX'}\zeta_x -\sum_{x \in \calX'} \alpha_x \leq \omegadown(\alpha_{\xmax},r) \; .
	\]
	Since both $\bfzeta$ and $\bfalpha$ are in $\realsKX$, the middle term in the above is $\alpha_{\xmax}- \zeta_{\xmax}$. Thus, \eqref{eq:omegaConditionsOnXmax} follows. 
\end{proof}

Recall that our plan is to ensure the existence of a ``close" pair by using a sphere-packing approach. However, since the triangle inequality does not hold for $d$, we must use a somewhat different approach. Towards that end, define the positive quadrant associated with $\bfalpha$ and $r$ as
\begin{multline*}
\calQ'(\bfalpha,r) = \{ \bfzeta' \in \realsXminusone  : \\
\forall x \in \calX',\;  0 \leq \zeta_x - \alpha_x \leq \omega'(\alpha_x,r)  \} \; , 
\end{multline*}
where $\xmax = \xmax(\bfalpha)$ and $\omega'(\alpha,r)$ is as defined in \eqref{eq:omegaPrime}.

\begin{lemma}
	\label{lm:QSpheresNonintersecting}
	Let $y_a, y_b \in \calY$ be such that $\xmax(\bfalpha) = \xmax(\bfbeta)$. If $\calQ'(\bfalpha,r)$ and $\calQ'(\bfbeta,r)$ have a non-empty intersection, then $d(\bfalpha,\bfbeta) \leq r$.
\end{lemma}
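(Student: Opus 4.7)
My plan is to take an arbitrary common point $\bfzeta' \in \calQ'(\bfalpha,r) \cap \calQ'(\bfbeta,r)$ and to establish $d(\alpha_x,\beta_x) \leq r$ for every $x \in \calX$ separately, which by \eqref{eq:dDef} yields $d(\bfalpha,\bfbeta) \leq r$. The definition of $\calQ'$ furnishes, for each $x \in \calX' = \calX \setminus \{\xmax\}$, both $0 \leq \zeta_x-\alpha_x \leq \omega'(\alpha_x,r)$ and $0 \leq \zeta_x-\beta_x \leq \omega'(\beta_x,r)$, where $\xmax = \xmax(\bfalpha) = \xmax(\bfbeta)$. Since $d_1$ and $d_2$ are manifestly symmetric in their arguments, so is $d$, and I may freely swap the roles of $\alpha_x$ and $\beta_x$.

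The easy coordinates are $x \in \calX'$. Assuming without loss of generality that $\alpha_x \leq \beta_x$, one has $0 \leq \beta_x - \alpha_x \leq \zeta_x - \alpha_x \leq \omega'(\alpha_x,r)$. A short check gives $\omega'(\alpha,r) \leq \omegadown(\alpha,r) \leq \omegaup(\alpha,r)$ (the second by squaring the inequality $\sqrt{\alpha r}+r/2 \geq \sqrt{r^2/4+\alpha r}$). By \eqref{eq:calBscalar} this places $\beta_x$ inside $\calB(\alpha_x,r)$, so $d(\alpha_x,\beta_x) \leq r$.

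The real work is the coordinate $x = \xmax$, on which $\calQ'$ imposes no direct constraint. I would import the summation trick used in \Cref{lemm:CinB}. Since $\bfalpha,\bfbeta \in \realsKX$, assuming WLOG $\alpha_{\xmax} \leq \beta_{\xmax}$ is equivalent to $\sum_{x \in \calX'}\alpha_x \geq \sum_{x \in \calX'}\beta_x$. For each $x \in \calX'$, chaining $\alpha_x \leq \zeta_x$ with $\zeta_x-\beta_x \leq \omega'(\beta_x,r) = \omegadown(\beta_x,r)/(|\calX|-1)$, and invoking the monotonicity of $\omegadown(\cdot,r)$ together with $\beta_x \leq \beta_{\xmax}$, yields $\alpha_x-\beta_x \leq \omegadown(\beta_{\xmax},r)/(|\calX|-1)$. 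Summing over the $|\calX|-1$ indices in $\calX'$ gives
\[
\beta_{\xmax}-\alpha_{\xmax} \;=\; \sum_{x \in \calX'}(\alpha_x-\beta_x) \;\leq\; \omegadown(\beta_{\xmax},r) \; ,
\]
which by \eqref{eq:calBscalar} puts $\alpha_{\xmax}$ inside $\calB(\beta_{\xmax},r)$, so $d(\alpha_{\xmax},\beta_{\xmax}) \leq r$ by symmetry. Taking the maximum over $x$ completes the proof. The delicate point is exactly this last step: the $1/(|\calX|-1)$ factor baked into $\omega'$ is calibrated so that summing $|\calX|-1$ copies reproduces $\omegadown(\beta_{\xmax},r)$, providing control on the otherwise unconstrained coordinate.
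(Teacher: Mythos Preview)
Your proof is correct and follows essentially the same approach as the paper: the paper establishes $\bfbeta \in \calC(\bfalpha,r)$ and then invokes \Cref{lemm:CinB}, whose proof contains exactly the summation trick you import to control the $\xmax$ coordinate; you simply inline that argument instead of citing the lemma. The only cosmetic difference is that you exploit the symmetry of $d$ to assume $\alpha_x \leq \beta_x$ coordinate-wise, whereas the paper handles the two cases $\alpha_x \leq \beta_x$ and $\alpha_x > \beta_x$ explicitly.
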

\begin{proof}
	By (\ref{eq:calBVectorDef}), (\ref{eq:calBK}), and Lemma~\ref{lemm:CinB}, it suffices to prove that $\bfbeta \in \calC(\bfalpha,r)$. Define $\calC'(\bfalpha,r)$ as the result of applying a prime operation on each member of $\calC(\bfalpha,r)$, where $\xmax = \xmax(\bfalpha)$. Hence, we must equivalently prove that $\bfbeta' \in \calC'(\bfalpha,r)$. By \eqref{eq:calC}, we must show that for all $x \in \calX'$,
	\begin{equation}
	\label{eq:alphaxbetaxdistance}
	-\omega'(\alpha_x,r) \leq \beta_x - \alpha_x \leq \omega'(\alpha_x,r) \; .
	\end{equation}
	
	Since we know that the intersection of $\calQ'(\bfalpha,r)$ and $\calQ'(\bfbeta,r)$ is non-empty, let $\bfzeta'$ be a member of both sets. Thus, we know that for $x\in \calX'$, $0 \leq \zeta_x - \alpha_x \leq \omega'(\alpha_x,r)$, and $0 \leq \zeta_x - \beta_x \leq \omega'(\beta_x,r)$.
	For each $x\in \calX'$ we must consider two cases: $\alpha_x \leq \beta_x$ and $\alpha_x > \beta_x$.
	
	Consider first the case $\alpha_x \leq \beta_x$. Since $\zeta_x - \alpha_x \leq \omega'(\alpha_x,r)$ and $\beta_x - \zeta_x \leq 0$, we conclude that $\beta_x - \alpha_x \leq \omega'(\alpha_x,r)$. Conversely, since $\beta_x - \alpha_x \geq 0$ and, by \eqref{eq:omegaPrime}, $\omega'(\alpha_x,r) \geq 0$, we have that $\beta_x - \alpha_x \geq - \omega'(\alpha_x,r)$. Thus we have shown that both inequalities in \eqref{eq:alphaxbetaxdistance} hold.
	
	To finish the proof, consider the case $\alpha_x > \beta_x$. We have already established that $\omega'(\alpha_x,r) \geq 0$. Thus, since by assumption $\beta_x - \alpha_x \leq 0$, we have that $\beta_x - \alpha_x \leq \omega'(\alpha_x,r)$. Conversely, since $\zeta_x - \beta_x \leq \omega'(\beta_x,r)$ and $\alpha_x - \zeta_x \leq 0$, we have that $\alpha_x - \beta_x \leq \omega'(\beta_x,r)$. We now recall that by \eqref{eq:omegaPrime}, the fact that $\alpha_x \geq \beta_x$ implies that $\omega'(\beta_x,r) \leq \omega'(\alpha_x,r)$. Thus, $\alpha_x - \beta_x \leq \omega'(\alpha_x,r)$. Negating gives $\beta_x - \alpha_x \geq -\omega'(\alpha_x,r)$, and we have once again proved the two inequalities in \eqref{eq:alphaxbetaxdistance}.
\end{proof}

\subsection{Weighted ``sphere"-packing}
The volume of our ``sphere" $\calQ'(\bfalpha,r)$ unfortunately depends on $\bfalpha$. We would like then to alleviate this dependency by defining a density over $\realsXminusone$ and derive a lower bound on the weight of $\calQ'(\bfalpha,r)$. Let $\density : \reals \to \reals$ be defined as $\density(\zeta) \triangleq 1/\sqrt{4\zeta}$.
Next, for $\bfzeta' \in \realsXminusone$, abuse notation and define $\density : \realsXminusone \to \reals$ as $\density(\bfzeta') \triangleq \prod_{x \in \calX'} \density(\zeta_x)$.
The weight of $\calQ'(\bfalpha,r)$ is then defined as $\weightQTag \triangleq \int_{\calQ'(\bfalpha,r)} \density \dd \bfzeta'$.
The following lemma proposes a lower bound on $\weightQTag$ that does not depend on $\bfalpha$.
\begin{lemma}
	\label{lm:weightQTagLowerBound}
	The weight $\weightQTag$ satisfies
	\begin{equation}
	\label{eq:weightQTagLowerBound}
	\weightQTag \geq r^{\frac{|\calX|-1}{2}}\p{\sqrt{2+\frac{1}{|\calX|-1}}-\sqrt{2}}^{|\calX|-1}  .
	\end{equation}
\end{lemma}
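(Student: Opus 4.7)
My plan is to exploit two structural features: by construction $\calQ'(\bfalpha,r)$ is a Cartesian product of the scalar intervals $[\alpha_x,\alpha_x+\omega'(\alpha_x,r)]$ over $x\in\calX'$, and the density $\density(\bfzeta')$ factors as $\prod_{x\in\calX'}\density(\zeta_x)$ with $\density(\zeta)=1/\sqrt{4\zeta}$. By Fubini the integral then factors into a product of one-dimensional integrals, each of which evaluates in closed form since $\int (4\zeta)^{-1/2}\,d\zeta=\sqrt{\zeta}$. This produces
\[
\weightQTag = \prod_{x\in\calX'}\p{\sqrt{\alpha_x+\omega'(\alpha_x,r)}-\sqrt{\alpha_x}},
\]
so the lemma reduces to the scalar inequality
\[
g(\alpha) := \sqrt{\alpha+\omega'(\alpha,r)}-\sqrt{\alpha} \geq \sqrt{r}\p{\sqrt{2+1/K}-\sqrt{2}},
\]
uniformly in $\alpha\geq 0$, where $K=|\calX|-1$.

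To prove this scalar inequality I would split on the two branches of $\omegadown$. In the regime $\alpha\leq 2r$ one has $\omegadown=r$, so $\omega'=r/K$ is constant in $\alpha$; rationalizing gives $g(\alpha)=(r/K)/(\sqrt{\alpha+\omega'}+\sqrt{\alpha})$, which is plainly non-increasing in $\alpha$, so the minimum is attained at $\alpha=2r$ and a direct computation evaluates it to precisely $\sqrt{r}\p{\sqrt{2+1/K}-\sqrt{2}}$. In the regime $\alpha>2r$ I would use the substitution $s=\omegadown=\sqrt{r^2/4+\alpha r}-r/2$, which inverts to $\alpha=s(s+r)/r$ and, after a short algebraic step, yields the identity $\alpha+\omega'=(s/r)\p{s+r(1+1/K)}$. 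Rationalization then gives
\[
g(\alpha) = \frac{\sqrt{r}/K}{\sqrt{1+r(1+1/K)/s}+\sqrt{1+r/s}},
\]
whose denominator is manifestly decreasing in $s$, so $g$ is non-decreasing along this branch. The infimum is again attained at the boundary $s=r$ (i.e.\ $\alpha=2r$), and matches the value from the first regime.

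Combining the two regimes gives the scalar bound for every $\alpha\geq 0$; taking the product over the $|\calX|-1=K$ coordinates produces the claimed lower bound on $\weightQTag$. The main obstacle is the monotonicity in the second regime: direct differentiation of $g$ is awkward because $\omega'$ depends on $\alpha$ through a square root, but the substitution $s=\omegadown$ followed by rationalization reduces the monotonicity to an immediate inspection. The only genuinely new algebra is verifying the identity $\alpha+\omega'=(s/r)\p{s+r(1+1/K)}$, which is a short calculation from the inversion $\alpha=s(s+r)/r$.
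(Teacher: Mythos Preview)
Your proposal is correct and follows the same overall strategy as the paper: factor the weight integral using Fubini into $\prod_{x\in\calX'}\psi_r(\alpha_x)$ with $\psi_r(\alpha)=\sqrt{\alpha+\omega'(\alpha,r)}-\sqrt{\alpha}$, then show that this scalar function is minimized at $\alpha=2r$, where it takes the value $\sqrt{r}\bigl(\sqrt{2+1/K}-\sqrt{2}\bigr)$.

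The only real difference is in how the monotonicity of $\psi_r$ on the branch $\alpha\geq 2r$ is established. The paper argues via calculus: it asserts that $\psi_r'$ is nonzero on $(2r,\infty)$ and checks $\psi_r'(2r)>0$, hence $\psi_r$ is increasing there by continuity of the derivative. Your approach instead uses the substitution $s=\omegadown(\alpha,r)$ to rewrite $\alpha=s(s+r)/r$ and $\alpha+\omega'=(s/r)\bigl(s+r(1+1/K)\bigr)$, after which rationalization gives a closed form whose denominator is manifestly decreasing in $s$. This is a bit more explicit than the paper's ``it can be shown'' and sidesteps differentiating through the square root in $\omegadown$; the paper's version is terser but leaves more to the reader. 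On the branch $\alpha\leq 2r$ both arguments coincide (constant $\omega'$, rationalize, denominator increasing in $\alpha$). The identification of the minimizer and the final value agree exactly.
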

\begin{proof}
	Since $\density(\bfzeta')$ is a product,
	\begin{align*}
	\weightQTag &= \prod_{x\in\calX'} \int_{\alpha_x}^{\alpha_x+\omega'(\alpha_x,r)} \frac{\mathrm{d}\zeta_x}{2\sqrt{\zeta_x}} = \prod_{x\in\calX'} \psi_r(\alpha_x) \; ,
	\end{align*}
	where $\psi_r(\alpha) \triangleq \sqrt{\alpha+\omega'(\alpha,r)}-\sqrt{\alpha}$. It can be shown that $\psi_r(\alpha)$ is decreasing when $\alpha<2r$ simply by using the first derivative. As for $\alpha \geq 2r$, it can be shown that $\psi_r'(\alpha)$ is non-zero. Since $\psi_r'(2r)>0$ we conclude that $\psi_r(\alpha)$ is increasing. By continuity we conclude that $\psi_r(\alpha)$ is minimal for $\alpha=2r$ and thus we get (\ref{eq:weightQTagLowerBound}).
\end{proof}

We divide the letters in $\calYSmall$ to $|\calX|$ subsets, according to their $\xmax$ value. The largest subset is denoted by $\calY'$, and we henceforth fix $\xmax$ accordingly. We limit our search to $\calY'$. 

Let $\calV'$ be the union of all the quadrants corresponding to possible choices of $\bfalpha$. Namely,
\[
\calV' \triangleq \bigcup_{\substack{\bfalpha \in \realsKplusX \; , \\ x_{\max}(\bfalpha)=x_{\max} } } \calQ'(\bfalpha,r)  \; .
\]
In order to bound the weight of $\calV'$, we introduce the simpler set $\calU'$.
\[
\calU' \triangleq \ppp{ \bfzeta' \in \realsXminusone: \sum_{x\in\calX'} \zeta_x \leq 2, \; \zeta_x \geq 0 \; \forall x\in\calX' } \;.
\]
The constraint $r \leq 1$ in the following lemma will be motivated shortly.
\begin{lemma}
	\label{lm:VSubsetU}
	Let $r \leq 1 $. Then, $\calV' \subseteq \calU'$.
\end{lemma}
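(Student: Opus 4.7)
The plan is to unpack what membership in $\calV'$ buys us, verify the two defining constraints of $\calU'$ in turn, and then reduce everything to a pointwise bound on $\omegadown(\alpha,r)$ that uses the hypothesis $r\leq 1$.

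First I would fix an arbitrary $\bfzeta'\in\calV'$. By the definition of $\calV'$ as a union, there exists some $\bfalpha\in\realsKplusX$ with $\xmax(\bfalpha)=\xmax$ such that $\bfzeta'\in\calQ'(\bfalpha,r)$, which means $0\leq \zeta_x-\alpha_x\leq \omega'(\alpha_x,r)$ for every $x\in\calX'$. From this, the first defining condition of $\calU'$, namely $\zeta_x\geq 0$ for all $x\in\calX'$, is immediate since $\zeta_x\geq \alpha_x\geq 0$. So the only real content is to check $\sum_{x\in\calX'}\zeta_x\leq 2$.

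For the sum, I would chain
\[
\sum_{x\in\calX'}\zeta_x \;\leq\; \sum_{x\in\calX'}\alpha_x + \sum_{x\in\calX'}\omega'(\alpha_x,r) \;\leq\; 1 + \frac{1}{|\calX|-1}\sum_{x\in\calX'}\omegadown(\alpha_x,r),
\]
where the first inequality uses the box description of $\calQ'(\bfalpha,r)$, the bound $\sum_{x\in\calX'}\alpha_x = 1-\alpha_{\xmax}\leq 1$ uses $\bfalpha\in\realsKplusX$, and the second equality is the definition \eqref{eq:omegaPrime} of $\omega'$. Since the sum on the right has exactly $|\calX|-1$ terms, the claim will follow from the pointwise estimate $\omegadown(\alpha_x,r)\leq 1$ for every $x\in\calX'$.

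To establish that pointwise estimate I would handle the two pieces of the maximum in $\omegadown(\alpha,r)=\max(\sqrt{r^2/4+\alpha r}-r/2,\,r)$ separately. The $r$-branch is bounded by $1$ directly from the hypothesis $r\leq 1$. For the other branch, the subadditivity of the square root gives $\sqrt{r^2/4+\alpha r}\leq r/2+\sqrt{\alpha r}$, so $\sqrt{r^2/4+\alpha r}-r/2\leq \sqrt{\alpha r}$; then $\alpha_x\leq 1$ (since $\alpha_x$ is one coordinate of a probability vector) together with $r\leq 1$ yields $\sqrt{\alpha_x r}\leq 1$. Combining the two branches gives $\omegadown(\alpha_x,r)\leq 1$ and closes the argument, producing $\sum_{x\in\calX'}\zeta_x\leq 2$ as required.

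There is no real obstacle here; the only subtlety is to notice that the $1/(|\calX|-1)$ normalization in the definition of $\omega'$ exactly cancels the number of summands, so what looks like a dimension-dependent bound collapses to a uniform one, and the hypothesis $r\leq 1$ is used precisely to tame both pieces of $\omegadown$.
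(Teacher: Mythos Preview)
Your proof is correct, and it is actually cleaner than the paper's argument, though the overall skeleton is the same.

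Both proofs start identically: pick $\bfzeta'\in\calV'$, obtain the witnessing $\bfalpha$, read off $\zeta_x\geq 0$, and bound $\sum_{x\in\calX'}\zeta_x\leq\sum_{x\in\calX'}\alpha_x+\sum_{x\in\calX'}\omega'(\alpha_x,r)$. The divergence is in how the second sum is controlled. The paper keeps $\sum_{x\in\calX'}\alpha_x=1-\alpha_{\xmax}$, then uses monotonicity of $\omegadown(\cdot,r)$ to replace every $\omegadown(\alpha_x,r)$ by $\omegadown(\alpha_{\xmax},r)$, arriving at $1-\alpha_{\xmax}+\omegadown(\alpha_{\xmax},r)$; it then splits into the cases $\alpha_{\xmax}\gtrless 2r$ to identify which branch of the $\max$ is active and bounds each case separately. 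You instead throw away $-\alpha_{\xmax}$, observe that the $1/(|\calX|-1)$ normalization cancels the number of terms, and prove the uniform pointwise bound $\omegadown(\alpha,r)\leq 1$ for $0\leq\alpha\leq 1$, $r\leq 1$ via subadditivity of the square root. Your route avoids both the monotonicity step and the case split, at the cost of a slightly cruder intermediate estimate; since the target is only $\sum\zeta_x\leq 2$, nothing is lost.
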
 
\begin{proof}
	Assume $\bfzeta'\in\calV'$. Then, there exists $\bfalpha \in \realsKplusX$ such that $0 \leq \zeta_x -\alpha_x \leq \omega'(\alpha_x,r)$ for all $x \in \calX'$. Hence, $\zeta_x \geq 0$  for all $x \in \calX'$. Moreover,
	\begin{align}
	\label{eq:sumOfZetaTag}
	\sum_{x \in \calX'} \zeta_x &\leq \sum_{x \in \calX'} \alpha_x + \sum_{x \in \calX'} \omega'(\alpha_x,r) \nonumber \\
	&\leq 1-\alpha_{x_{\max}} + \omegadown(\alpha_{\xmax},r) \;.
	\end{align}
	There are two cases to consider. In the case where $\alpha_{x_{\max}} \geq 2r$ we have
	\begin{align*}
	\sum_{x \in \calX'} \zeta_x &\leq 1-\alpha_{x_{\max}} + \sqrt{\frac{r^2}{4} + \alpha_{x_{\max}}r} - \frac{r}{2} \\
	&\leq 1-\alpha_{x_{\max}} + \sqrt{\frac{\alpha_{x_{\max}}^2}{16} + \frac{\alpha_{x_{\max}}^2}{2}} - \frac{r}{2} \\ 
	& \leq 2 \;,
	\end{align*}
	where the second inequality is due to the assumption $\alpha_{x_{\max}} \geq 2r$. In the case where $\alpha_{x_{\max}} \leq 2r$, \eqref{eq:sumOfZetaTag} becomes
	\[
	\sum_{x \in \calX'} \zeta_x \leq 1-\alpha_{x_{\max}} + r \leq 2-\alpha_{x_{\max}} \leq 2 \; ,
	\]
	where we assumed $r\leq 1$. Therefore, $\bfzeta' \in \calU'$.
\end{proof}

The lemma above and the non-negativity of $\varphi$, enable us to upper bound the weight of $\calV'$, denoted by $\weightVTag $, using $\weightVTag \triangleq \int_{\calV'} \varphi \dd \bfzeta' \leq \int_{\calU'} \varphi \dd \bfzeta'$.
We define the mapping $\rho_x = \sqrt{\zeta_x}$ for all $x \in \calX'$ and perform a change of variables. As a result, $\calU'$ is mapped to $\calS' \triangleq \ppp{ \bfrho'  \in \realsXminusone: \sum_{x \in \calX'} \rho_x^2 \leq 2, \; \rho_x \geq 0 }$,
which is a quadrant of a $|\calX|-1$ dimensional ball of a $\sqrt{2}$ radius. The density function $\varphi$ transforms into the unit uniform density function since $\dd \zeta_x/\sqrt{4\zeta_x} = \dd \rho_x$. Hence, for $r \leq 1$,
\begin{equation}
\label{eq:VWeight}
\weightVTag \leq \int_{\calS'} \dd V =\left(\frac{\pi}{2}\right)^{\frac{|\calX|-1}{2}} \frac{1}{\Gamma\p{1+\frac{|\calX|-1}{2}}} \; ,
\end{equation}
where we have used the well known expression for the volume of a multidimensional ball. Finally, we prove \Cref{thm:DIDeg}.
\begin{proof}[Proof of \Cref{thm:DIDeg}]
	Recall that we are assuming $|\calY| >2|\calX|$. According to the definition of $\calY'$, we get by \eqref{eq:YSmallSize} that
	\begin{equation}
	\label{eq:calYTagSize}
	|\calY'| \geq \frac{|\calYSmall|}{|\calX|} \geq \frac{|\calY|}{2|\calX|} > 1 \; .
	\end{equation}
As a result, we have at least two points in $\calY'$, and are therefore in a position to apply a sphere-packing argument. Towards this end, let $r$ be such that the starred equality in the following derivation holds:
	\begin{align}
\begin{split}
\label{eq:spherePackingDerivation}
		\sum_{\bfalpha \in \calY'} &\weightQTag  \\
		&\geq \frac{|\calY|}{2|\calX|}\cdot r^{\frac{|\calX|-1}{2}}\p{\sqrt{2+\frac{1}{|\calX|-1}}-\sqrt{2}}^{|\calX|-1} \\
		&\stackrel{(\ast)}{=} \p{\frac{\pi}{2}} ^{\frac{|\calX|-1}{2}} \frac{1}{\Gamma\p{1+\frac{|\calX|-1}{2}}} \\
		&\geq \weightVTag \; .
\end{split}
	\end{align}
Namely,
	\begin{multline}
		\label{eq:rstarDef}
	r \triangleq \frac{\pi}{4}\p{\sqrt{1+\frac{1}{2(|\calX|-1)}}-1}^{-2} \\
	\cdot \p{\frac{2|\calX|}{\Gamma\p{1+\frac{|\calX|-1}{2}}}}^{\frac{2}{|\calX|-1}}  \cdot  |\calY|^{-\frac{2}{|\calX|-1}} \; .
	\end{multline}
There are two cases to consider. If $r \leq 1$, then all of (\ref{eq:spherePackingDerivation}) holds, by \eqref{eq:weightQTagLowerBound}, \eqref{eq:VWeight} and \eqref{eq:calYTagSize}. We take $\rcritical = r$, and deduce the existence of a pair $y_a,y_b \in \calY'$ for which $d(\bfalpha,\bfbeta) \leq r$. Indeed, assuming otherwise would contradict (\ref{eq:spherePackingDerivation}), since each $\calQ'$ in the sum is contained in $\calV'$, and, by \Cref{lm:QSpheresNonintersecting} and our assumption, all summed $\calQ'$ are disjoint.
 
We next consider the case $r > 1$. Now, any pair of letters $y_a,y_b \in \calY'$ satisfies $d(\bfalpha,\bfbeta) \leq r$. Indeed, by \eqref{eq:dScalarDef} and \eqref{eq:dDef},
\[
d(\bfalpha,\bfbeta) \leq \|\bfalpha-\bfbeta\|_{\infty} \leq 1 < r \; ,
\]
where $\|\cdot\|_{\infty}$ is the maximum norm.

We have proved the existence of $y_a,y_b \in \calY' \subset \calYSmall$ for which $d(\bfalpha,\bfbeta) \leq r$. By  (\ref{eq:DIBoundYSmall}) and (\ref{eq:rstarDef}), the proof is finished.
\end{proof}

\end{document}